\documentclass[10pt,conference]{IEEEtran}
\IEEEoverridecommandlockouts
\usepackage{cite}
\usepackage{amsmath,amssymb,amsfonts}
\usepackage{algorithm}
\usepackage{graphicx}
\usepackage{subcaption}
\usepackage{textcomp}
\usepackage{xcolor}
\usepackage{mathrsfs}
\usepackage{url}
\usepackage{algpseudocode}
\algrenewcommand\algorithmicrequire{\textbf{Input:}}
\algrenewcommand\algorithmicensure{\textbf{Output:}}

\usepackage{amsthm}
\newtheorem{theorem}{Theorem}
\newtheorem{lemma}{Lemma}
\newtheorem{corollary}{Corollary}

\def\BibTeX{{\rm B\kern-.05em{\sc i\kern-.025em b}\kern-.08em
    T\kern-.1667em\lower.7ex\hbox{E}\kern-.125emX}}
\renewcommand{\thesection}{\arabic{section}}  
\newcounter{OPequ}

\usepackage{mathtools}

\usepackage[normalem]{ulem}
\usepackage{scalerel}
\usepackage{tikz}
\usetikzlibrary{svg.path}

\definecolor{orcidlogocol}{HTML}{A6CE39}
\tikzset{
  orcidlogo/.pic={
    \fill[orcidlogocol] svg{M256,128c0,70.7-57.3,128-128,128C57.3,256,0,198.7,0,128C0,57.3,57.3,0,128,0C198.7,0,256,57.3,256,128z};
    \fill[white] svg{M86.3,186.2H70.9V79.1h15.4v48.4V186.2z}
                 svg{M108.9,79.1h41.6c39.6,0,57,28.3,57,53.6c0,27.5-21.5,53.6-56.8,53.6h-41.8V79.1z M124.3,172.4h24.5c34.9,0,42.9-26.5,42.9-39.7c0-21.5-13.7-39.7-43.7-39.7h-23.7V172.4z}
                 svg{M88.7,56.8c0,5.5-4.5,10.1-10.1,10.1c-5.6,0-10.1-4.6-10.1-10.1c0-5.6,4.5-10.1,10.1-10.1C84.2,46.7,88.7,51.3,88.7,56.8z};
  }
}

\newcommand\orcidicon[1]{\href{https://orcid.org/#1}{\mbox{\scalerel*{
\begin{tikzpicture}[yscale=-1,transform shape]
\pic{orcidlogo};
\end{tikzpicture}
}{|}}}}

\usepackage{hyperref}

\begin{document}

%\title{Utility-Privacy Tradeoff in Multi-Agent Systems
%{\footnotesize \textsuperscript{}}

%}
\title{Trust-Aware Adaptive Disclosure for Inference Privacy Preservation in Multi-Agent Networks
{\footnotesize \textsuperscript{}}

}
\author{
Puspanjali Ghoshal$^{*}$\orcidicon{0000-0002-2324-7001} and Tobias J. Oechtering$^{\dagger}$\\
$^{*}$R. C. Bose Centre for Cryptology and Security, Indian Statistical Institute Kolkata, India.\\
$^{\dagger}$Information Science and Engineering Department,
KTH Royal Institute of Technology, Sweden.
}
%\author{Anonymous Author(s)}

\maketitle 

\begin{abstract}
Agent based systems are increasingly deployed in information critical systems including healthcare management systems, and smart grids. In this paper, we consider a multi-agent system where each agent has a latent goal that needs to be kept hidden from observing adversaries. More specifically, this paper studies privacy-preserving consensus in networked multi-agent systems under goal inference attacks. We propose a Trust-Aware Privacy Control framework that adapts message disclosure based on the dynamic trust relationships between agents. The proposed method controls information release using a trust-dependent stochastic policy. This enables a tradeoff between consensus performance and privacy preservation. Experiments demonstrate that the proposed method reduces adversarial goal inference accuracy compared to representative baselines, while maintaining competitive consensus utility, thereby highlighting the effectiveness of trust-aware mechanisms in privacy preservation of the agents in multi-agent systems.
\end{abstract}

\begin{IEEEkeywords}
Multi-Agent System (MAS), Inference Privacy, Trust
\end{IEEEkeywords}

\section{Introduction}
Multi-Agent Systems (MASs) are deployed for distributed inference tasks in multiple domains including sensor networks, and healthcare systems. In such systems, agents collaborate and communicate locally to achieve global objectives. Classical distributed algorithms like consensus and distributed optimization assume that the locally exchanged information is fully observable and trustworthy. In most real life scenarios, the local communication exposes sensitive latent information such as goals, preferences, or intentions. Studies have shown that the communication patterns can be exploited by adversaries to infer private attributes of the agents \cite{wang2024dpconsensus, ding2023review}. 

For example, consider a network of energy producers in a smart grid setting. To maintain the stability of the grid, participants must exchange operational information and coordinate amongst themselves to meet demand requirements. However, individual production strategies, reserve capacities, or operating plans may reveal commercially sensitive information to competing participants. In such a scenario, participants are required to cooperate to achieve a common objective while limiting the amount of information disclosed to less trusted parties.

%consider a fleet of autonomous drones performing search-and-rescue in a disaster zone. Each drone coordinates its trajectory with neighbours to maintain coverage and avoid collision. Additionally, each drone has a private mission goal (like, priority target location or restricted zone avoidance strategy). Unrestricted communication improves coordination, but an external observer or compromised drone can analyze message patterns to infer these hidden mission priorities. This creates a tradeoff between coordination efficiency and goal privacy.

There is, therefore, a need of decentralised privacy preserved coordination. Existing approaches relying on  differential privacy or additive noise mechanisms~\cite{Huang2015DP, deng2024DPconsensus}, degrade system performance. Recently, trust-aware coordination has been explored, where interaction is based on trust relationships~\cite{trustsurvey2025, resilientMAS2025}. However, to the best of our knowledge, the integration of trust-aware adaptation with inference-privacy guarantees remains underexplored.

We propose a Trust-Aware Privacy Control (TAPC) mechanism that dynamically regulates the amount of information disclosed based on inter-agent trust. The key idea is to reduce information exposure to low-trust agents while preserving useful communication with trusted neighbours.

The main contributions of this work are:
\begin{itemize}
\item A trust-aware probabilistic information disclosure mechanism for privacy-preserving consensus is proposed, linking trust and information leakage.
\item A theoretical characterisation of trust-leakage relationships.
\item Experimental analysis demonstrating improved privacy-utility tradeoffs.
\end{itemize}

\section{Related Work}
Machine learning–based inference attacks, including classification and sequence modeling techniques, effectively exploit message statistics and temporal dependencies \cite{tu2021AttackonComm, ma2023PrivSecInMAS}. This motivates an interest in inference-aware communication design, which will reduce the information content of messages while preserving task performance.

Privacy preservation in distributed MAS has been studied in the context of consensus, optimization, and distributed learning. Representative works include privacy-preserving distributed optimisation using stochastic noise mechanisms \cite{PGAAMAS}, and dimension reduction mechanisms for decentralized privacy preservation\cite{ghoshal2025dynamic}. Although these methods provide quantifiable privacy guarantees, they %introduce a tradeoff between privacy and system utility. Excessive noise degrades consensus accuracy, results in slow convergence, and destabilizes coordination dynamics. Moreover, most existing solutions assume uniform privacy across all communication links, 
ignore heterogeneity in trust or interaction reliability.

Trust and reputation are two central concepts in agentic systems. Frameworks based on these two concepts improve robustness against unreliable, faulty, and malicious agents \cite{trustsurvey2025, resilientMAS2025}. Existing works based on the incorporation of trust in privacy preservation either require a central trust management authority \cite{VANET}, or have been applied in query based settings \cite{MAGPIE}. Thus, the relationship between trust and decentralized privacy leakage remains largely unexplored in literature.

In contrast to existing work, this paper integrates trust with privacy-aware message generation. The proposed framework uses trust as a privacy control variable that governs selective disclosure. This allows the system to simultaneously adapt to network heterogeneity, preserve system performance, and reduce adversarial goal inference capability.

\section{Problem Formulation}
We consider a MAS comprising of \(N\) agents represented by the set $\mathcal{A}=\{1,2,\ldots,N\}.$ The communication topology is modeled as an undirected graph $\mathcal G=(\mathcal{V},\mathcal{E}),$ where \(\mathcal{V}=\mathcal{A}\) denotes the set of agents and \(\mathcal{E}\subseteq \mathcal{V}\times\mathcal{V}\) represents communication links. Let \(\mathcal N_i=\{j:(i,j)\in\mathcal E\}\) denote the set of neighbours of agent \(i\).

Let \(x_i(t)\in \mathbb{R}^{d}\) denote the observable state of agent \(i\) at time instant \(t\). Each agent possesses a latent goal variable $g_i \in \mathcal{Y},$ where \(\mathcal{Y}=\{1,2,\ldots,K\}\) denotes a finite goal space. The latent goal may correspond to a destination, mission objective, optimization preference, or task intent. Unlike the observable state \(x_i(t)\), the goal variable \(g_i\) is private and must not be inferred by unauthorised entities.

To accomplish a cooperative task, agent \(i\) transmits a message $m_{ij}(t)$. The communication history of agent \(i\) over a time horizon \(T\) is given by

\begin{equation}
\mathcal{H}_i^T= \left\{m_{ij}(t):j\in\mathcal N_i,\;t=0,\ldots,T\right\}.
\end{equation}

\subsection{Threat Model}

We consider a passive inference adversary that observes communication exchanges and exploits temporal communication patterns to infer the hidden goals of participating agents. The adversary is assumed to possess complete knowledge of the communication protocol and network topology but cannot directly access the internal states or latent goals of the agents.

Let \(\mathcal{O}_A\subseteq \mathcal{E}\) denote the set of communication links observable to the adversary. The adversary collects the communication history

\begin{equation}
\mathcal{H}_A^T=\left\{
m_{ij}(t):(i,j)\in\mathcal{O}_A; t=0,\ldots,T\right\},
\end{equation}

\noindent and constructs an estimate of the private goal $\hat g_i = f_A(\mathcal{H}_A^T),$ where \(f_A(\cdot)\) denotes an inference model.

The inference success probability is defined as 
\begin{equation}\label{eq:leakage_metric}
    P_{\mathrm{inf}}= \Pr(\hat g_i=g_i).
\end{equation} 
A larger value of \(P_{\mathrm{inf}}\) indicates a higher degree of privacy leakage. Low-trust agents are assumed to present a higher risk of information leakage and may also act as inference adversaries.

\subsection{Trust-Aware Communication Model}

Each agent maintains a trust score toward its neighbouring agents. Let $\tau_{ij}\in[0,1]$ denote the trust level assigned by agent \(i\) to neighbouring agent \(j\). A larger value of \(\tau_{ij}\) indicates a higher degree of trust. The transmitted message is generated according to a trust-aware disclosure policy

\begin{equation}
m_{ij}(t) = \pi_i\Big(x_i(t), g_i,\tau_{ij}\Big),
\end{equation}

\noindent where $\pi_i(\cdot,\cdot,\cdot)$ adjusts the amount of information disclosed based on the trust level of the receiver.

Since the communication policy explicitly depends on the trust score \(\tau_{ij}\), the resulting communication history is also trust dependent. Accordingly, the communication history observed by an adversary can be represented as $\mathcal{H}_A^T = \mathcal{H}_A^T(\tau).$ Therefore, the amount of information leaked about the latent goal is inherently influenced by the trust profile of the network.

Specifically, agents disclose more information to highly trusted neighbours and increasingly obfuscated information to low-trust neighbours. Consequently, the communication history observed by an adversary becomes a function of the trust distribution within the network.

\subsection{Inference Privacy Metric}

The objective of the adversary is to infer latent goals from communication behavior. Let $\tau = \{\tau_{ij}\mid (i,j)\in\mathcal E\}$ denote the network trust profile. The inference privacy loss associated with agent \(i\) is defined as $L_i(\tau) = I\!\left(g_i;\mathcal H_A^T \right)$, where $I\!\left(\cdot;\cdot\right)$ represent the mutual information metric. Since the communication history depends on the trust profile, the resulting leakage is indirectly influenced by $\tau$. The average network-wide inference leakage is given by

\begin{equation}
L(\tau) = \frac{1}{N} \sum_{i=1}^{N} L_i(\tau).
\end{equation}
A smaller value of $L(\tau)$ implies stronger resistance against goal inference attacks.

Mutual information metric quantifies the statistical dependence between an agent's latent goal and the observable information, thereby capturing potential inference risks arising from communication patterns. The objective considered here is to limit the disclosure of goal-related information to untrusted agents. Mutual information characterizes average information leakage rather than worst-case leakage, providing a practical measure of inference risk. However, direct estimation of mutual information from high-dimensional communication histories is challenging in practice. Therefore, the experimental section adopts adversarial goal-classification accuracy defined in Equation (\ref{eq:leakage_metric}) as an empirical surrogate for inference leakage. A random forest \cite{RandomForest} classifier is trained to infer latent goals from observed communication histories. Lower accuracy indicates lower inference leakage. Similar attack-based evaluation methodologies are widely used in privacy-preserving systems to assess resistance against realistic inference attacks.

\subsection{Task Utility}
The agents collaborate to achieve a global objective characterized by the utility function $U = U(\mathbf{x},\mathbf{m}),$ where $\mathbf{x} = [x_1,\ldots,x_N]$ and $\mathbf{m} = [m_1,\ldots,m_N].$ Depending on the application, \(U\) may represent consensus accuracy, formation stability, tracking performance, or distributed optimization efficiency.

\subsection{Problem Statement}

The proposed formulation follows a counter-adversarial design. The objective is to design a trust-aware communication policy that minimizes communication-based goal inference while preserving cooperative task performance and communication efficiency. Let the communication overhead be defined as

\begin{equation}
C(\mathbf{m}) = \sum_{t=0}^{T}\sum_{(i,j)\in\mathcal E}|m_{ij}(t)|,
\end{equation}

\noindent where $|m_{ij}(t)|$ denotes the size of the message transmitted from agent $i$ to agent $j$ in bits. The trust-aware inference privacy preservation problem is formulated as

\begin{equation}
\begin{aligned}
\min_{\pi} \quad & L(\tau) \\
\text{s.t.}\quad & U(\mathbf{x},\mathbf{m}) \ge U_{\min}, \\
& C(\mathbf{m}) \le C_{\max}, \\
& m_{ij}(t) = \pi_i \Big(x_i(t),g_i,\tau_{ij}\Big).
\end{aligned}
\label{eq:optimization_problem}
\end{equation}

The optimization problem seeks a communication policy that minimizes the information an adversary can infer about the latent goals of agents while guaranteeing a minimum level of task utility and respecting communication constraints.

%\vspace{4mm}
\section{Trust-Aware Privacy Control Framework}\label{Relationship}

We now illustrate the proposed trust-aware framework for mitigating goal inference attacks in multi-agent systems. The framework consists of four components: trust evaluation, adaptive disclosure control, privacy-preserving message generation, and leakage-aware optimization.

\subsection{Trust Evaluation Model}

Each agent maintains trust scores for its neighbouring agents based on historical communication behavior and interaction outcomes. Let $r_{ij}(t)$ denote the communication reliability of agent $j$ as observed by agent $i$, and let $s_{ij}(t)$ denote a behavioral consistency score. The communication reliability is defined as

\begin{equation}
r_{ij}(t) = \frac{N_{ij}^{\mathrm{succ}}(t)}{N_{ij}^{\mathrm{tot}}(t)},
\label{eq:reliability}
\end{equation}

\noindent where $N_{ij}^{\mathrm{succ}}(t)$ and $N_{ij}^{\mathrm{tot}}(t)$ denote the number of successful and total communication interactions, respectively, within a predefined observation window. To quantify behavioral consistency, each agent compares recent communication behavior with historical patterns. The consistency score is defined as

\begin{equation}
s_{ij}(t) = \exp\left(-\frac{1}{W}\sum_{k=1}^{W}\left|m_{ij}(t-k)-\bar m_{ij}(t)\right| \right),
\label{eq:consistency}
\end{equation}

\noindent where $W$ denotes the observation window and $\bar m_{ij}(t)$ denotes the average communication behavior. The trust score assigned by agent $i$ to agent $j$ is computed as

\begin{equation}
\tau_{ij}(t) = \lambda r_{ij}(t) + (1-\lambda)s_{ij}(t),
\label{eq:trust}
\end{equation}

\noindent where $0\le\lambda\le1$ controls the relative importance of reliability and behavioral consistency. Since both reliability and consistency lie within $[0,1]$, the resulting trust score also satisfies $\tau_{ij}(t)\in[0,1]$. Higher trust scores indicate a more reliable communication partner, whereas lower values indicate potentially risky agents that may facilitate privacy leakage. The network trust profile is represented by $\tau= \left\{\tau_{ij}(t)\;|\;(i,j)\in\mathcal E \right\}.$

\subsection{Trust-Based Disclosure Control}

To regulate information sharing, each communication link is assigned a disclosure coefficient. The disclosure coefficient associated with communication link $(i,j)$ is defined as

\begin{equation}
\alpha_{ij}(t) = \tau_{ij}^{\gamma}(t),
\label{eq:alpha}
\end{equation}

\noindent where $\gamma>0$ is a sensitivity parameter.

The coefficient satisfies $0\le\alpha_{ij}(t)\le1.$ When $\alpha_{ij}(t)$ approaches one, a larger amount of task-relevant information is disclosed. Conversely, smaller values correspond to stronger privacy protection.

\subsection{Trust-Aware Message Generation}

The key idea of the proposed framework is to selectively disclose task-relevant information while limiting the amount of goal-related information exposed to potentially untrusted agents. Let $z_i(t) = \phi\left(x_i(t),g_i\right)$ denote a task-relevant feature extracted from the observable state and latent goal of agent $i$, where $\phi(\cdot)$ is an application-dependent feature mapping. The transmitted message is generated as

\begin{equation}
m_{ij}(t) = \alpha_{ij}(t) z_i(t) + \left(1-\alpha_{ij}(t)\right)\eta_i(t),
\label{eq:message}
\end{equation}

\noindent where $\eta_i(t)$ denotes an obfuscation signal. In this work, Gaussian perturbation is adopted, $\eta_i(t) \sim \mathcal N \left(0,\sigma^2 I \right),$ where $\sigma^2$ denotes the noise variance, resulting in a trust-dependent Gaussian disclosure mechanism. The focus of the proposed framework is the adaptive trust dependent disclosure policy. Alternative perturbation mechanisms may also be incorporated. Equation (\ref{eq:message}) ensures that highly trusted neighbours receive more informative messages, while low-trust neighbours receive increasingly obfuscated information.

% \subsection{Inference Leakage Analysis}

% The adversary observes the communication history $\mathcal H_A^T(\tau).$ The inference privacy leakage associated with agent $i$ is quantified using conditional mutual information:

% \begin{equation}
% L_i(\tau) = I \left(g_i;\mathcal H_A^T \mid \tau \right),
% \label{eq:mi}
% \end{equation}

% where $I(\cdot;\cdot)$ denotes mutual information.

% The average network-wide inference leakage is given by

% \begin{equation}
% L(\tau) = \frac{1}{N} \sum_{i=1}^{N} L_i(\tau).
% \label{eq:network_leakage}
% \end{equation}

% A smaller value of $L(\tau)$ implies stronger resistance against communication-based goal inference attacks.

\subsection{Leakage Upper Bound}

The disclosure coefficient directly affects the signal-to-noise ratio available to the adversary. Let

\begin{equation}
P_z = \mathbb E\left[|z_i(t)|^2\right]
\end{equation}

\noindent denote the average power of the task-relevant feature. The effective signal-to-noise ratio observed on communication link $(i,j)$ is

\begin{equation}
\mathrm{SNR}_{ij} = \frac{ \alpha_{ij}^{2} P_z}{\left(1-\alpha_{ij}\right)^2 \sigma^2}.
\label{eq:snr}
\end{equation}

Under the proposed message-generation model, the adversarial observation process can be approximated as an additive white Gaussian noise (AWGN) channel with effective signal-to-noise ratio given by Equation (\ref{eq:snr}). Thus, the mutual information between the transmitted signal and the adversarial observation is upper bounded by the Shannon channel capacity formula \cite{Cover}. Therefore,

\begin{equation}
L_i(\tau) \le \frac{1}{2}\log \left(1+\mathrm{SNR}_{ij}\right).
\label{eq:bound}
\end{equation}
\noindent with equality achieved when the effective transmitted feature is Gaussian distributed. This equation establishes an explicit relationship between trust-aware disclosure and inference leakage. Since the latent goal influences the transmitted feature $z_i(t)$, the information available for goal inference cannot exceed the information contained in the observed communication channel. The complete algorithm is outlined in Algorithm \ref{alg:trust}.

\begin{algorithm}[h]
\caption{Trust-Aware Privacy-Preserving Communication}
\label{alg:trust}
\begin{algorithmic}[1]

\For{each communication round $t$}
    \For{each link $(i,j)\in\mathcal E$}
        \State Compute reliability $r_{ij}(t)$
        \State Compute consistency score $s_{ij}(t)$
        \State Update trust score using (\ref{eq:trust})
        \State Compute disclosure coefficient using (\ref{eq:alpha})
        \State Generate noise $\eta_i(t)$
        \State Construct message using (\ref{eq:message})
        \State Transmit $m_{ij}(t)$
    \EndFor
\EndFor

\end{algorithmic}
\end{algorithm}

The computational complexity is dominated by trust updates and message generation across communication links. Therefore, the overall complexity per communication round is $O(|\mathcal E|).$

\section{Theoretical Analysis}\label{theo-Results}
% \textbf{Proposition 1:}
% For a fixed noise variance $\sigma^2$, the inference leakage upper bound is a monotonically increasing function of the disclosure coefficient $\alpha_{ij}$.

% \textit{Proof:} From (\ref{eq:snr}),

% \begin{equation}
% \frac{\partial \mathrm{SNR}_{ij}} {\partial \alpha_{ij}} >0,\qquad 0<\alpha_{ij}<1.
% \end{equation}

% Furthermore, the function

% \begin{equation}
% f(x) = \frac{1}{2}\log(1+x)
% \end{equation}
% is strictly increasing for $x\ge0$.

% Therefore,

% \begin{equation}
% \frac{\partial L_i}{\partial \alpha_{ij}} > 0.
% \end{equation}

% Hence, increasing the disclosure coefficient increases inference leakage, while reducing disclosure decreases the information available to the adversary.

% \bigskip

\begin{lemma}
    For fixed noise variance $\sigma^2$, the inference leakage upper bound is monotonically increasing with the disclosure coefficient $\alpha_{ij}$.
\end{lemma}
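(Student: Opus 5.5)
The plan is to treat the upper bound in (\ref{eq:bound}) as a composition of two scalar functions of $\alpha_{ij}$, and show each piece is increasing on $[0,1)$. Write
\[
B(\alpha) = \tfrac{1}{2}\log\bigl(1+\mathrm{SNR}(\alpha)\bigr), \qquad \mathrm{SNR}(\alpha)=\frac{\alpha^2 P_z}{(1-\alpha)^2\sigma^2},
\]
with $P_z>0$ and $\sigma^2>0$ fixed, following (\ref{eq:snr}).

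The first step is to establish that $\mathrm{SNR}(\alpha)$ is strictly increasing on $[0,1)$. I would observe that $\mathrm{SNR}(\alpha) = \frac{P_z}{\sigma^2}\,h(\alpha)^2$ with $h(\alpha)=\alpha/(1-\alpha)$. Since
\[
h'(\alpha)=\frac{1}{(1-\alpha)^2}>0
\]
and $h\ge 0$ on $[0,1)$, the square $h^2$ is nondecreasing there, and strictly increasing for $\alpha>0$. Equivalently, a direct computation gives
\[
\frac{\partial \mathrm{SNR}}{\partial\alpha} = \frac{2\alpha P_z}{\sigma^2(1-\alpha)^3},
\]
which is positive for $0<\alpha<1$ and vanishes only at $\alpha=0$. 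That single point does not affect strict monotonicity, because the derivative is positive on the open interval.

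The second step is the outer function. The map $f(x)=\tfrac12\log(1+x)$ is strictly increasing for $x\ge0$, since $f'(x)=\frac{1}{2(1+x)}>0$. The chain rule then gives
\[
B'(\alpha)=f'(\mathrm{SNR})\,\mathrm{SNR}'(\alpha)>0 \quad \text{on } (0,1),
\]
so $B$ is strictly increasing on $[0,1)$.

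The third step, which I expect to be the only delicate point, is the endpoint $\alpha_{ij}=1$ (full trust, $\tau_{ij}=1$). There the denominator of (\ref{eq:snr}) vanishes. I would handle this by extending to the extended reals: $\mathrm{SNR}\to\infty$ as $\alpha\to1^-$, so $B(1)=+\infty$, and monotonicity is preserved on all of $[0,1]$. Physically, this is noiseless disclosure.

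Finally, I would note the consequence for trust. Because $\alpha_{ij}=\tau_{ij}^{\gamma}$ with $\gamma>0$ is increasing in $\tau_{ij}$, the leakage bound is also monotone in trust. This links the lemma to the trust–leakage characterization the paper claims as a contribution.
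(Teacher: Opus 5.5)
Your proposal is correct and follows essentially the same route as the paper: differentiate $\mathrm{SNR}_{ij}$ to get $2\alpha_{ij}P_z/(\sigma^2(1-\alpha_{ij})^3)>0$ on $(0,1)$, then compose with the strictly increasing map $x\mapsto\frac{1}{2}\log(1+x)$ via the chain rule. Your handling of the endpoints $\alpha_{ij}=0$ and $\alpha_{ij}=1$, which the paper excludes by assuming $0<\alpha_{ij}<1$, and your use of a separate symbol $B$ for the bound (the paper differentiates $L_i$ itself) add a little rigor but do not change the argument.
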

\begin{proof}
Differentiating Equation (\ref{eq:snr}) with respect to $\alpha_{ij}$ yields

\begin{equation*}
\frac{\partial \mathrm{SNR}_{ij}}{\partial \alpha_{ij}} = \frac{2\alpha_{ij}P_z} {\sigma^2(1-\alpha_{ij})^{3}}.
\end{equation*}

Since $P_z>0$, $\sigma^2>0$, and $0<\alpha_{ij}<1$, it follows that

\begin{equation*}
\frac{\partial \mathrm{SNR}_{ij}}{\partial \alpha_{ij}}>0.
\end{equation*}

Furthermore, the function $f(x)=\frac{1}{2}\log(1+x)$ is strictly increasing for $x\ge0$. Therefore,

\begin{equation*}
\frac{\partial L_i}
{\partial \alpha_{ij}}
=
\frac{\partial L_i}
{\partial \mathrm{SNR}_{ij}}
\frac{\partial \mathrm{SNR}_{ij}}
{\partial \alpha_{ij}}
>0.
\end{equation*}

Hence, the leakage upper bound increases monotonically with the disclosure coefficient $\alpha_{ij}$. Consequently, reducing disclosure lowers the information available to the adversary and decreases inference leakage.
\end{proof}

% \begin{theorem}(Privacy-Utility Tradeoff)
%     Assume that the task utility function $U(\mathbf{x},\mathbf{m})$ is continuous and monotonically increasing with respect to the disclosure coefficients $\alpha_{ij}$. Then there exists at least one Pareto-optimal disclosure vector $\boldsymbol{\alpha}^{\star}$ that balances task utility and inference privacy.
% \end{theorem}
% \begin{proof}
%     From Lemma 1, the inference leakage $L(\tau)$ is monotonically increasing with the disclosure coefficients. By assumption, $U(\mathbf{x},\mathbf{m})$ is also monotonically increasing with $\alpha_{ij}$. Therefore, improving privacy by reducing $\alpha_{ij}$ necessarily reduces utility, while increasing utility requires larger disclosure. Hence, privacy and utility constitute conflicting objectives.

%     The feasible set 
%     \begin{equation}
%         \mathcal A= \left\{ \boldsymbol{\alpha} : 0\le \alpha_{ij}\le1, \; (i,j)\in\mathcal E \right\}
%     \end{equation}
%     is compact and nonempty. Since both objective functions are continuous on $\mathcal A$, the existence of efficient solutions follows from standard multi-objective optimization theory. Therefore, at least one Pareto-optimal disclosure vector $\boldsymbol{\alpha}^{\star}$ exists that achieves a balance between inference privacy and task utility.
% \end{proof}
\begin{theorem}(Trust-Leakage Relationship)
    For a fixed noise variance $\sigma^2$ and sensitivity parameter $\gamma>0$, the inference leakage upper bound is a monotonically increasing function of the trust score $\tau_{ij}$.
\end{theorem}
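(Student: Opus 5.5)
The plan is to prove the theorem as a composition of monotone maps, with Lemma 1 supplying the main link. The leakage upper bound in (\ref{eq:bound}) depends on $\tau_{ij}$ only through the disclosure coefficient. That dependence is the chain
\[
\tau_{ij}\;\mapsto\;\alpha_{ij}=\tau_{ij}^{\gamma}\;\mapsto\;\mathrm{SNR}_{ij}(\alpha_{ij})\;\mapsto\;\tfrac{1}{2}\log\left(1+\mathrm{SNR}_{ij}\right),
\]
and a composition of increasing maps is increasing. Lemma 1 already shows that the last two arrows together are strictly increasing in $\alpha_{ij}$ on $(0,1)$. So the only new ingredient is the monotonicity of the first map (\ref{eq:alpha}).

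The first step is to differentiate (\ref{eq:alpha}). For $\tau_{ij}\in(0,1)$ and $\gamma>0$,
\[
\frac{\partial \alpha_{ij}}{\partial \tau_{ij}}=\gamma\,\tau_{ij}^{\gamma-1}>0 .
\]
I will also record that $\tau_{ij}\in(0,1)$ gives $\alpha_{ij}\in(0,1)$. This places us exactly in the open interval where Lemma 1 applies.

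The second step is the chain rule:
\[
\frac{\partial L_i}{\partial \tau_{ij}}=\frac{\partial L_i}{\partial \alpha_{ij}}\,\frac{\partial \alpha_{ij}}{\partial \tau_{ij}}>0 .
\]
Here $L_i$ denotes the bound in (\ref{eq:bound}). Written out fully, the right-hand side equals
\[
\frac{1}{2(1+\mathrm{SNR}_{ij})}\cdot\frac{2\alpha_{ij}P_z}{\sigma^2(1-\alpha_{ij})^3}\cdot\gamma\tau_{ij}^{\gamma-1}.
\]
Each factor is positive given $P_z>0$ and $\sigma^2>0$, the same standing assumptions used in the proof of Lemma 1. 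This gives strict monotonicity on $(0,1)$.

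The main obstacle is the endpoints, which I will handle with a derivative-free argument. At $\tau_{ij}=0$ the derivative $\gamma\tau^{\gamma-1}$ may blow up when $\gamma<1$. At $\tau_{ij}=1$ we get $\alpha_{ij}=1$, so $\mathrm{SNR}_{ij}=\infty$ and the bound is $+\infty$ under the extended-real convention. The argument is as follows:
\begin{itemize}
\item $t\mapsto t^{\gamma}$ is strictly increasing on $[0,1]$.
\item $\alpha\mapsto \alpha/(1-\alpha)$ is strictly increasing on $[0,1)$, so its square $\alpha^2/(1-\alpha)^2$, and hence $\mathrm{SNR}_{ij}$, is increasing.
\item $\log(1+x)$ is strictly increasing on $[0,\infty]$.
\end{itemize}
Composing these gives $\tau<\tau'\Rightarrow$ bound$(\tau)<$ bound$(\tau')$ on all of $[0,1]$, with value $+\infty$ at $\tau=1$. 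I will present the derivative computation as the main argument and this order-preserving composition as the endpoint justification.
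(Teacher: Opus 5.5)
Your proposal is correct and follows the paper's proof. The paper likewise differentiates $\alpha_{ij}=\tau_{ij}^{\gamma}$ to get $\gamma\tau_{ij}^{\gamma-1}>0$ and then applies the chain rule together with Lemma~1. Your derivative-free endpoint argument also closes a small hole in the paper's version: it asserts $\tau_{ij}\in(0,1]$ yet relies on Lemma~1, which only covers $0<\alpha_{ij}<1$, so it never handles $\tau_{ij}=1$, where $\alpha_{ij}=1$ and the bound is $+\infty$.
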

\begin{proof}
    Differentiating (\ref{eq:alpha}) with respect to $\tau_{ij}$ gives

\begin{equation}
\frac{\partial \alpha_{ij}}{\partial \tau_{ij}} = \gamma\tau_{ij}^{\gamma-1}.
\end{equation}

Since $\gamma>0$ and $\tau_{ij}\in(0,1]$,

\begin{equation}
\frac{\partial \alpha_{ij}}{\partial \tau_{ij}}>0.
\end{equation}

Using the chain rule together with Lemma~1,

\begin{equation}
\frac{\partial L_i}{\partial \tau_{ij}} = \frac{\partial L_i}{\partial \alpha_{ij}}\frac{\partial \alpha_{ij}} {\partial \tau_{ij}} >0.
\end{equation}

Therefore, the inference leakage upper bound increases monotonically with the trust score $\tau_{ij}$. Consequently, highly trusted communication links reveal more information to the receiver and potentially to an observing adversary. Thus, this theorem proves that for a specific communication link, increasing trust increases information disclosure and therefore leakage on that link.
\end{proof}

\begin{corollary}(Effect of Sensitivity Parameter)
    For a fixed trust score $\tau_{ij}\in(0,1)$, increasing the sensitivity parameter $\gamma$ reduces the inference leakage upper bound.
\end{corollary}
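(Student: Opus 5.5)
The plan is to treat the leakage upper bound in (\ref{eq:bound}) as a composition $\gamma \mapsto \alpha_{ij} \mapsto \mathrm{SNR}_{ij} \mapsto \tfrac{1}{2}\log(1+\mathrm{SNR}_{ij})$, with $\tau_{ij}$, $P_z$ and $\sigma^2$ held fixed. We then show that the first map is strictly decreasing in $\gamma$ and invoke Lemma~1 for the rest. The argument mirrors the proof of the Trust-Leakage Relationship theorem, except that the differentiation is now with respect to $\gamma$ rather than $\tau_{ij}$.

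\emph{Step 1.} Differentiate (\ref{eq:alpha}) with respect to $\gamma$. Writing $\alpha_{ij}=\tau_{ij}^{\gamma}=e^{\gamma\ln\tau_{ij}}$ gives
\begin{equation*}
\frac{\partial \alpha_{ij}}{\partial \gamma}=\tau_{ij}^{\gamma}\ln\tau_{ij}.
\end{equation*}
For $\tau_{ij}\in(0,1)$ we have $\tau_{ij}^{\gamma}>0$ and $\ln\tau_{ij}<0$, so this derivative is strictly negative. This is exactly where the open interval in the hypothesis matters. At $\tau_{ij}=1$ one has $\alpha_{ij}\equiv1$, independent of $\gamma$, and the SNR in (\ref{eq:snr}) is undefined. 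At $\tau_{ij}=0$ one has $\alpha_{ij}\equiv 0$ for all $\gamma>0$. Both endpoints are therefore excluded, and no monotonicity claim is made there.

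\emph{Step 2.} Check that Lemma~1 applies. For $\tau_{ij}\in(0,1)$ and $\gamma>0$, the value $\alpha_{ij}=\tau_{ij}^{\gamma}$ lies in $(0,1)$, which is precisely the range on which Lemma~1 established $\partial \mathrm{SNR}_{ij}/\partial\alpha_{ij}>0$ and hence $\partial L_i/\partial\alpha_{ij}>0$ for the bound. The chain rule then gives
\begin{equation*}
\frac{\partial L_i}{\partial \gamma}=\frac{\partial L_i}{\partial \alpha_{ij}}\,\frac{\partial \alpha_{ij}}{\partial \gamma}<0,
\end{equation*}
as a product of a positive and a negative factor. Here $L_i$ denotes the upper bound $\tfrac12\log(1+\mathrm{SNR}_{ij})$, as in Lemma~1 and the theorem. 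A strictly negative derivative on the interval $\gamma\in(0,\infty)$ yields strict decrease by the mean value theorem, so increasing $\gamma$ reduces the bound.

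\emph{Main obstacle.} The only delicate point is the bookkeeping of domains. We must make sure that $\alpha_{ij}$ never reaches the endpoint $1$, where (\ref{eq:snr}) blows up, and that we are speaking of the bound rather than $L_i$ itself. Both are handled by the hypothesis $\tau_{ij}\in(0,1)$ together with $\gamma>0$. As an optional check, one can confirm the conclusion directly by substituting into (\ref{eq:snr}): the map $\alpha\mapsto \alpha^2/(1-\alpha)^2$ is increasing on $(0,1)$, and $\alpha=\tau_{ij}^{\gamma}$ decreases in $\gamma$.
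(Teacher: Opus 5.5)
Your proposal is correct and follows the paper's proof: differentiate $\alpha_{ij}=\tau_{ij}^{\gamma}$ in $\gamma$ to get $\tau_{ij}^{\gamma}\ln\tau_{ij}<0$, then combine this with Lemma~1 via the chain rule. Your additional bookkeeping makes explicit what the paper leaves implicit without changing the argument: you verify $\alpha_{ij}\in(0,1)$ so that Lemma~1 applies, exclude the endpoints $\tau_{ij}\in\{0,1\}$, and pass from a negative derivative to strict decrease.
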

\begin{proof}
    Differentiating (\ref{eq:alpha}) with respect to $\gamma$ yields

\begin{equation}
\frac{\partial \alpha_{ij}}{\partial \gamma} = \tau_{ij}^{\gamma}\ln(\tau_{ij}).
\end{equation}

Since $0<\tau_{ij}<1$, we have $\ln(\tau_{ij})<0$, which implies

\begin{equation}
\frac{\partial \alpha_{ij}}{\partial \gamma}<0.
\end{equation}

Using Lemma~1,

\begin{equation}
\frac{\partial L_i}{\partial \gamma} = \frac{\partial L_i}{\partial \alpha_{ij}}\frac{\partial \alpha_{ij}}{\partial \gamma} < 0.
\end{equation}

Hence, increasing the sensitivity parameter $\gamma$ decreases the inference leakage upper bound. Therefore, $\gamma$ acts as a tunable privacy-control parameter that allows stronger privacy protection without modifying the trust evaluation mechanism.
\end{proof}

For the preliminary theoretical analysis of the proposed trust dependent framework, the leakage analysis has been carried out by considering a simplified instantaneous setting. Alternative message generation mechanisms along with the incorporation of dynamic temporal message history will be explored in future extensions.

\section{Experiments and Results}\label{exp-Results}

This section evaluates the proposed Trust-Aware Privacy Control (TAPC) framework against three representative baseline methods. Performance is evaluated in terms of consensus utility, inference leakage, and communication cost.

\subsection{Experimental Setup}

We consider a network of $N=50$ agents interacting over a connected random geometric graph. Each agent is assigned a latent goal from a discrete set of size $K=6$. The system evolves over $T=100$ communication rounds. To assess privacy, the experimental evaluation employs an adversary that observes communication histories and performs goal inference using machine learning.

The TAPC method introduces trust-dependent stochastic perturbation in message generation, while the baselines represent full disclosure (FD), uniform Gaussian noise (GN), and trust-thresholded communication (TOC).

In the context of privacy-preserving consensus~\cite{wang2024dpconsensus, ding2023review}, FD corresponds to the standard baseline where agents exchange unperturbed state information. Noisy communication mechanisms, represented in this work as GN, have been widely adopted for privacy in distributed settings~\cite{Huang2015DP}. In contrast, TOC builds on trust-aware communication strategies where information is released only to neighbours that satisfy a prespecified trust threshold~\cite{trustsurvey2025,resilientMAS2025}.

%Inference leakage is measured as the classification accuracy of a Random Forest adversary trained on statistical features extracted from temporal communication histories. The attacker uses aggregated message statistics to infer latent agent goals.

\subsection{Consensus Utility}

Consensus performance is measured using:
\begin{equation}
U = \frac{1}{1 + \frac{1}{N} \sum_{i=1}^{N} \|x_i - \bar{x}\|},
\end{equation}
where $\bar{x}$ denotes the mean state. As shown in Figure~\ref{fig:utility}, FD achieves the highest utility due to unrestricted communication. GN significantly degrades performance due to injected noise. TOC improves utility by selectively preserving structured communication on high-trust links. TAPC maintains competitive utility while incorporating stochastic privacy protection.

\begin{figure}[h]
\centering
\includegraphics[width=0.48\textwidth]{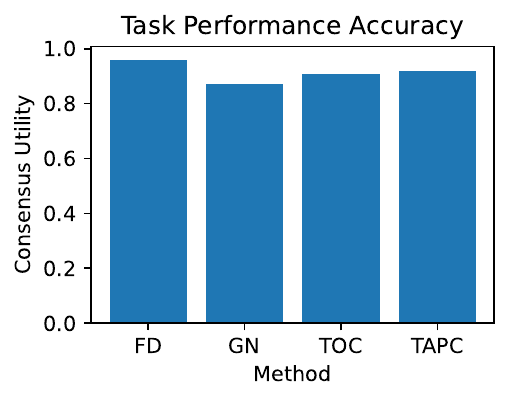}
\caption{Consensus utility comparison across methods.}
\label{fig:utility}
\end{figure}

\subsection{Inference Leakage Analysis}

Inference leakage is computed using Equation (\ref{eq:leakage_metric}) under a random forest adversary \cite{RandomForest} trained on statistical features extracted from temporal communication histories. As shown in Figure~\ref{fig:leakage}, FD results in complete information leakage, while GN reduces leakage at the cost of degraded utility. TOC further reduces leakage by limiting communication on low-trust links. TAPC achieves the lowest leakage among all evaluated methods, demonstrating the effectiveness of trust-aware stochastic masking.

\begin{figure}[h]
\centering
\includegraphics[width=0.48\textwidth]{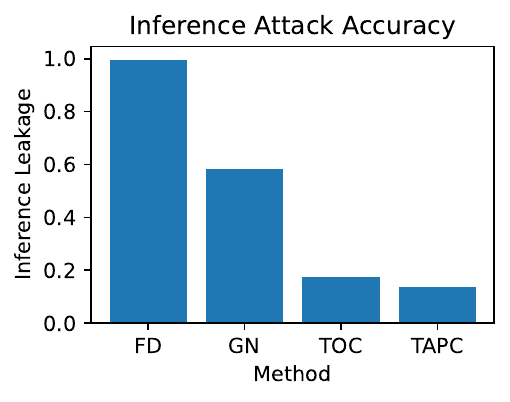}
\caption{Inference leakage comparison across methods.}
\label{fig:leakage}
\end{figure}

\subsection{Privacy--Utility Pareto chart}

Figure~\ref{fig:tradeoff} illustrates the tradeoff between utility and leakage. FD lies at the high-utility but high-leakage extreme, while GN shifts toward lower leakage but reduced performance. TOC provides a balanced intermediate tradeoff, whereas TAPC achieves a strong Pareto-efficient operating point among the evaluated methods.

\begin{figure}[h]
\centering
\includegraphics[width=0.48\textwidth]{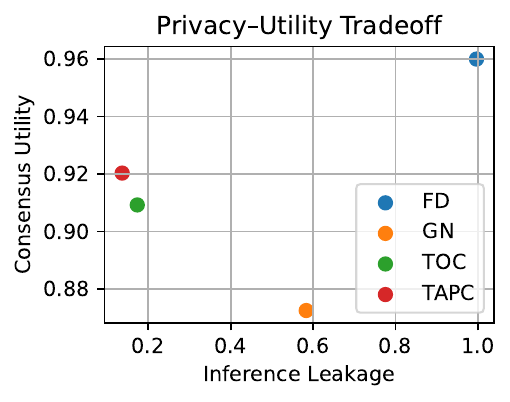}
\caption{Privacy-utility tradeoff between inference leakage and consensus utility.}
\label{fig:tradeoff}
\end{figure}

\subsection{Communication Cost}

With the number of agents fixed, all methods incur identical communication cost since each agent communicates over all edges at every iteration, isolating the effect of message design. The proposed TAPC method exhibits a steady increase in the communication cost as the number of agents increases. This is depicted in Figure \ref{fig:scalability}.

\begin{figure}[h]
\centering
\includegraphics[width=0.48\textwidth]{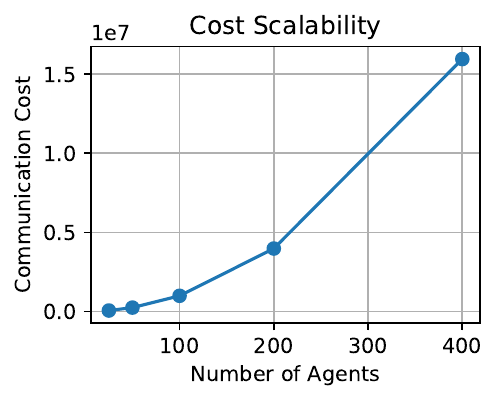}
\caption{Cost scaling with respect to number of agents.}
\label{fig:scalability}
\end{figure}

\subsection{Results Summary}

Table~\ref{tab:results} summarizes the performance for 50 agents. The experimental results demonstrate that TAPC achieves a strong balance between privacy and utility, consistently reducing inference leakage while maintaining competitive consensus performance compared to baseline methods.

\begin{table}[h]
\centering

\begin{tabular}{|p{0.2\linewidth}|p{0.2\linewidth}|p{0.2\linewidth}|p{0.2\linewidth}|}
\hline
Method & Utility & Leakage & Cost \\
\hline
FD   & 0.960 & 0.997 & 245000 \\
GN   & 0.872 & 0.583 & 245000 \\
TOC  & 0.909 & 0.173 & 245000 \\
TAPC & 0.920 & 0.137 & 245000 \\
\hline
\end{tabular}\caption{Performance comparison of different methods}
\label{tab:results}
\end{table}

%\vspace{4mm}
\section{Conclusion and Future Work}\label{Future}
This paper proposes a Trust-Aware Privacy Control (TAPC) framework for multi-agent consensus systems under inference-based privacy threats. Unlike conventional approaches that rely on uniform noise injection or hard trust thresholding, the proposed method introduces a trust-dependent stochastic masking mechanism that adaptively regulates information disclosure during inter-agent communication.

Experimental evaluations demonstrate the efficiency of TAPC over representative baselines. TAPC achieves a balance between consensus performance and privacy preservation. The results highlight that adaptive trust-aware stochastic control is more effective than both uniform perturbation and deterministic trust filtering for mitigating information leakage in multi-agent networks. This suggests that incorporating trust structure into privacy mechanisms provides a principled direction for improving secure coordination in distributed systems. The current theoretical analysis, however, considers a simplified instantaneous communication setting and establishes explicit relationships between trust, disclosure, and inference leakage.

Future work will extend the framework to account for temporal dependencies and information aggregation across communication histories. We will further extend the proposed framework to time-varying, directed communication graphs. This would enable its application to dynamic networked systems. Second, the message generation framework will be changed from the preliminary Gaussian noise to alternative mechanisms and the theoretical guarantees on privacy leakage bounds and convergence rates would be strengthened appropriately. Finally, TAPC would be deployed in real-world multi-agent applications to further validate its practical effectiveness.

\bibliographystyle{plain} % We choose the "plain" reference style
\bibliography{refs}
\vspace{12pt}

\end{document}